\newcommand*{\N}{\mathbb N}
\newcommand*{\R}{\mathbb R}
\renewcommand*{\P}{\mathbb P}
\newcommand*{\E}{\mathbb E}
\newcommand*{\Z}{\mathbb Z}
\newcommand*{\kA}{\mathcal A}
\newcommand*{\kS}{\mathcal S}
\newcommand*{\define}{\overset{\text{Def.}}=}
\newtheorem{lemma}{Lemma}
\theoremstyle{definition}
\newtheorem{definition}{Definition}
\theoremstyle{remark}
\newtheorem{remark}{Remark}
\title{Human-in-the-loop Reinforcement Learning for Data~Quality~Monitoring in Particle~Physics~Experiments}
\author{%
  Olivia Jullian Parra \\
  Department of Experimental Physics\\
  European Organization \\
  for Nuclear Research (CERN) \\
  1211 Geneva 23, Switzerland \\
  \texttt{olivia.jullian.parra@cern.ch} \\
  \And
  Julián García Pardiñas \\
  Department of Experimental Physics\\
  European Organization \\
  for Nuclear Research (CERN) \\
  1211 Geneva 23, Switzerland \\
  \texttt{julian.garcia.pardinas@cern.ch} \\
  \And
  Lorenzo Del Pianta Pérez \\
  Department of International Relations\\
  European Organization \\
  for Nuclear Research (CERN) \\
  1211 Geneva 23, Switzerland \\
  \texttt{lorenzo.del.pianta.perez@cern.ch} \\
  \And
  Maximilian Janisch \\
  Department of Mathematics\\
  University of Zürich\\
  Winterthurerstrasse 190, \\
  Zürich 8057, Switzerland \\
  \texttt{maximilian.janisch@math.uzh.ch} \\
  \And
  Suzanne Klaver \\
  Faculteit der B\`etawetenschappen\\
  Vrije Universiteit Amsterdam\\
  De Boelelaan 1105, 1081 HV \\
  Amsterdam, the Netherlands \\
  \texttt{suzanne.klaver@cern.ch} \\
  \And
  Thomas Lehéricy \\
  Department of Mathematics\\
  University of Zürich\\
  Winterthurerstrasse 190, \\
  Zürich 8057, Switzerland \\
  \texttt{thomas.lehericy@math.uzh.ch} \\
  \And
  Nicola Serra \\
  Department of Physics\\
  University of Zürich \\
  Winterthurerstrasse 190, Zürich 8057, Switzerland \\
  \texttt{nicola.serra@cern.ch} \\
}
\begin{document}

\maketitle

\vspace{-5pt}

\begin{abstract}
Data Quality Monitoring (DQM) is a crucial task in large particle physics experiments, since detector malfunctioning can compromise the data. DQM is currently performed by human shifters, which is costly and results in limited accuracy. In this work, we provide a proof-of-concept for applying human-in-the-loop Reinforcement Learning (RL) to automate the DQM process while adapting to operating conditions that change over time. We implement a prototype based on the Proximal Policy Optimization (PPO) algorithm and validate it on a simplified synthetic dataset. We demonstrate how a multi-agent system can be trained for continuous automated monitoring during data collection, with human intervention actively requested only when relevant. We show that random, unbiased noise in human classification can be reduced, leading to an improved accuracy over the baseline. Additionally, we propose data augmentation techniques to deal with scarce data and to accelerate the learning process. Finally, we discuss further steps needed to implement the approach in the real world, including protocols for periodic control of the algorithm's outputs.
\end{abstract}

\section{Introduction}
\label{sec:introduction}

Large particle physics experiments, such as those conducted at the Large Hadron Collider~\cite{ATLAS:2008xda,CMS:2008xjf,LHCb:2008vvz,ALICE:2008ngc}, produce data for physics studies by colliding particles continuously over extended periods of time. The total amount of data that can be collected is limited by the lifetime of the experiment. The data is recorded by complex detectors that can malfunction with an unknown frequency and in a variable and unpredictable manner. Unless promptly identified and fixed, these issues can render the data unusable for scientific studies. Consequently, Data Quality Monitoring (DQM) procedures that periodically inspect the data while being collected are of utmost importance.

The task of DQM in particle experiments is typically divided into two subsequent stages, one online and one offline. The online stage consists on the continuous monitoring of the data synchronously with the data collection process. This is typically done at LHC experiments at fixed time intervals of several minutes (time needed to accumulate enough counts for a meaningful statistical evaluation). When anomalies are detected, they are fixed before continuing the data collection. The main goal of this DQM stage is then to inspect the data to detect anomalies as fast as possible, ideally every time the histograms are re-filled. After the data has been recorded, it is inspected a second time in the offline stage. The lack of intrinsic time constraints allows to evaluate high-level distributions that require larger data sets and more time to reconstruct, to spot anomalies that have not been found in the online step and consequently discard the corresponding data. The main goal of this DQM stage is a high accuracy in the classification between good and anomalous data. 

Both DQM stages are typically performed visually by a pool of rotating non-expert shifters, that need to examine a large set of histograms representing different data features (for example information coming from different subdetectors) at every monitoring step. This manual approach presents multiple challenges. First of all, it is highly costly in terms of human resources, as each experiment requires hundreds of shifters per year for the task. Second, the shifters can have different criteria when judging the data quality and different levels of specific background knowledge, which introduces noise in the classification and hence reduces the accuracy. Finally, the human attention is naturally limited: the shifters are typically not able to thoroughly inspect all the histograms for all the monitoring steps (reducing the classification accuracy) and not able to keep up with the full rate of monitoring steps in the online regime (reducing the speed in the detection of anomalies).

The previous challenges can be overcome by adding automation to the task, and this is progressively done in multiple experiments, with machine learning (ML) techniques being particularly effective. One typical challenge that many of those approaches do not consider is the adaptation to operational conditions that change over time. For example, the particle beam conditions can be changed at different points in time, or the different subdetectors of a given experiment can experience multiple tunings for a long period following their first commissioning. These changes in conditions prevent the usage of ML techniques that are trained in a stationary regime, unless they are being manually re-trained every time it is needed.

This letter proposes for the first time the use of Reinforcement Learning (RL)~\cite{Sutton1998} to perform DQM at particle physics experiments, which offers a native way to adapt to changing conditions while capturing the interdependencies between the different data features. As a second benefit, RL techniques bring the unique possibility to automate more complex actions than a simple classification of good and anomalous data, with huge potential to further reduce person power costs while increasing the overall efficiency of the operations at an experiment's control room. While that second advantage refers to the online regime, we propose and study RL approaches both for online and offline, given the similarities between both regimes and the benefits of having common software approaches within a given experiment, in terms of development and maintainability. We propose to train the algorithms purely from the shifter's feedback, which is an example of Reinforcement Learning from Human Feedback (RLHF)~\cite{kaufmann2023survey}. This choice prevents the time-costly task of manually defining and updating histogram references for all the data features. We thus leverage the domain knowledge and intuition of the shifters, combined with up-to-date instructions from the detector experts. Finally, we propose to increase the sample efficiency of the training by providing approaches to augment the data, creating artificial data points with and without anomalies.

This letter illustrates the novel approach through an example implementation strategy for the online and offline use cases. The feasibility of the strategy is demonstrated through a series of proof-of-concept studies based on a simplified synthetic dataset. Regarding the RL algorithm, an actor-critic model~\cite{Sutton1998} is used together with a Proximal Policy Optimization (PPO)~\cite{schulman2017proximal} loss. This architecture is state-of-the-art and presents good properties in terms of training stability. A multi-agent~\cite{marl-book} structure is used for the online regime, while the offline regime uses a single agent. It should be noted that, since the goal of this work is to provide a proof of concept and not achieve a certain level of performance, the hyperparameters of the different setups have not been optimised beyond few iterations of manual tuning in each case.

The structure of the present paper is as follows. We start by reviewing the current state of the literature in Section~\ref{sec:related}. Then, the specific setup used for the proof-of-concept experiments is described in Section~\ref{sec:setup} and the technical implementation of the RL algorithms is presented in Section~\ref{sec:algorithm}. The experiments done in the offline and online regimes are described in Section~\ref{sec:offline} and Section~\ref{sec:online}, respectively. It should be noted that, while the online stage precedes the offline one in the real application, they are studied in inverse order in this work, since the offline stage represents a simpler case and serves as a basis for the online studies. The results of the experiments, their limitations and possible directions of future work are discussed in Section~\ref{sec:discussion}. Finally, the main conclusions of the paper are highlighted in Section~\ref{sec:conclusions}
\section{Related work}
\label{sec:related}

Although the DQM task in particle physics experiments is predominantly manual, there have been works that make use of machine learning techniques (see e.g. Ref.~\cite{doi:10.1142/9789811234033_0005} for a review). While some of those approaches are based on supervised learning (e.g. using Boosted Decision Trees~\cite{Adinolfi:2298467} or Convolutional Neural Networks~\cite{Pol:2683825}), most of them are based on semi-supervised learning (e.g. using Autoencoders~\cite{Pol:2650715, CMSHCAL:2023skb, Deja:2707754} or Variational Autoencoders~\cite{pol:hal-02428005}).

This letter proposes for the first time the use of RL techniques for the task of DQM in particle physics experiments. Those types of techniques have already been used for anomaly detection tasks in a broad variety of other systems (see Ref.~\cite{9956995} for a recent review), bringing as main advantages the capacity to adapt to changing conditions and to deal with highly-complex systems. Concerning the field of particle physics, RL has been used for tasks other than DQM, mostly connected to particle accelerator control~\cite{Hanten:2019khc, Xu:2023erw, Hirlaender:2023ivj, Assmann:2023vkq, Kafkes:2021jse, Hirlaender:2020eky, Kain:2020vjs, Pang:2020ose}. The efforts in that direction show great promise, although identify the low sample efficiency of RL algorithms as a problem to do a continuous training during system operation. This type of problem can also appear for the task of DQM discussed in this paper. To tackle it, we propose the usage of data-augmentation techniques, that have been broadly used to improve the training of deep learning models in multiple contexts. As an example, \cite{ijcai2021p631} and \cite{Shorten2019ASO} review data augmentation techniques used in the time series and image domains, respectively.

From the different types of RL, the work in this letter focuses on RLHL, characterised by the inclusion of human feedback in the reward function. This modification allows to align better the algorithm's behaviour to the human goals and is particularly useful when those goals are difficult to specify but easy for a human to judge. A recent review on types and applications of RLHL can be found in \cite{kaufmann2023survey}. The most typical applications, described in that review, are control systems and robotics, Large Language Models (a prime example of which is ChatGPT~\cite{ChatGPT}) and other generative tasks, and recommender systems.
\section{Experimental setup}
\label{sec:setup}

A machine starts in a nominal mode of operation, and is subject to random failures which shift it to one of several anomalous modes of operation. The machine produces a summary statistic of its mode of operation at regular intervals --- a \emph{data point}, in this paper a one-dimensional histogram represented by a fixed-length vector of real values. This constitutes a simplification of a typical real application, where each data point is represented by many different histograms. For each data point, we may have access to the machine’s mode of operation via a label provided by a shifter, indicating whether the mode of operation is nominal or anomalous. Depending on the experiment, the labels can reflect the ground truth or include some level of shifter-related noise. Since we aim to provide a proof-of-concept, we exclusively use synthetic data. Its generation is described in \ref{sec:dataset}. 

We distinguish two training regimes, \emph{online} and \emph{offline}. 
The labels will always be available in the offline case, and in the online case, only when requested by the algorithm. The transition between the modes of operation of the machine, which constitutes a Markovian process, is different between the offline and online regimes, and it will be specified in the following sub-sections. 
The Reinforcement Learning point of view is elucidated further in Section \ref{sect:environment}. The data points follow the same order as they would in real-world collection. This allows for studying the capability of our RL algorithm to adapt to changing conditions, especially abrupt changes to the distribution of anomalous and nominal data points.

\subsection{Offline regime}
In the offline regime, shifters will inspect all the histograms, following the order in which they have been collected, and classify them as either nominal or anomalous. The machine's mode of operation is reset to operational after every time step, at which it then breaks with a fixed probability $p_{\text{anom}}$ (in applications, this is done by censoring the sequence of observed data points), so that its modes of operation are independent and identically distributed. In our experiments, we choose $p_{\text{anom}}=0.3$. The distribution of the data points may still change over time due to changes of the data point distribution conditional on the mode of operation of the machine. 

\subsection{Online regime}

In this regime, the data is collected while the algorithm is running. This regime introduces an additional level of complexity with respect to the offline one: the need to repair the machine. Namely, if an anomaly is spotted (either by the shifter or by the algorithm), the machine is restarted and returned to its nominal mode of operation. Otherwise, the mode of operation of the machine remains unchanged. In both cases, the machine then evolves to a new mode of operation: if it is in an anomalous mode of operation, its mode of operation remains the same, and if the machine is in the nominal mode of operation, the next mode of operation will be anomalous with probability $p_{anom}$ (set to the same value as in the offline regime) or nominal with probability $1-p_{anom}$.

As a second additional level of complexity, the shifter's labels used for the training are not always accessible, to emulate the time limitations experienced by the shifters working in the online regime. This will be discussed in detail in Sec.~\ref{sec:algorithm}.

\subsection{Computing and software resources}
\label{sec:resources}

The experiments described in this paper have been performed in a commercial MacBook Pro laptop, with an Apple M1 Pro chip, 8 CPU cores and 16 GB of RAM. All the experiments are lightweight, presenting minimal resources in terms of computing time (order of few minutes per experiment, except for the one described in Sec.~\ref{sec:exp_data_augmentation}, that takes about 4 hours) and memory consumption. The algorithm is written in PyTorch~\cite{DBLP:journals/corr/abs-1912-01703}, adapting the PPO implementation in Ref.~\cite{pytorch_minimal_ppo}. The Gym library~\cite{DBLP:journals/corr/BrockmanCPSSTZ16} is used to create the RL environment.
\section{Reinforcement Learning algorithm}
\label{sec:algorithm}
We propose a multi-agent Proximal Policy Optimization (PPO) algorithm with two agents: A \emph{predictor}, whose task is to assign a label (nominal / anomalous) to each histogram; and a \emph{checker}, whose task is to decide, based on the histogram and on the decision of the predictor, whether to call a human shifter who will themselves provide a label. 
Readers unfamiliar with PPO will find a conceptual summary of the approach in appendix \ref{sect:appendix-PPO}.

We describe the environment in section \ref{sect:environment}, the actions and value functions of the actors in section \ref{sect:actions}, the training episodes in section \ref{sect:episodes}, the rewards in section \ref{sect:reward}, the loss function optimized by the PPO approach in section \ref{sect:loss}, and finally the training details in section \ref{sect:training}.

\subsection{Environment}\label{sect:environment}
At a given time $t$, following the description in section \ref{sec:setup}, the environment state 
consists of the current data point and of the current mode of operation (nominal / anomalous) of the machine. 
If a shifter is called or if we are in the offline regime, the state 
may be augmented by a human label. Only the data point is available to the predictor and to the checker.

\subsection{Agents and actions}\label{sect:actions}
There are two agents in our approach: The \emph{predictor} and the \emph{checker}. The predictor is in charge of deciding whether the mode of operation in a given state is nominal or anomalous --- recall that only the data point is available to the predictor. Which decision to take defines the two actions of the predictor. The checker has to examine the predictor's response for a given data point and decide whether to obtain a label from an external shifter or not. If called by the checker, a shifter will provide a label that can be used to train the predictor. If not called by the checker, in the online regime, there is a fixed probability $p$ (that we set to 0.1) with which a shifter will provide a label anyway. In the offline case, a label will always be provided, meaning that the checker’s actions have no effect on the state evolution and can thus be discarded.  Since we are using PPO, we need to train, for each agent, two networks simultaneously (conceptual details can be found in appendix \ref{sect:PPO-final}): The \emph{actor}, defining the policy of the agent, and the \emph{critic}, approximating the value function of the decision process. In our implementation, the predictor's and checker's actor and critic are all provided by a MultiLayer Perceptron (MLP) with one hidden layer of 10 units using the rectified linear unit (ReLU) activation function. In total we thus use $4$ MLPs in the online regime and $2$ in the offline regime.

\subsection{Training episodes}\label{sect:episodes}
In the offline regime, each individual histogram constitutes one training episode. In the online regime, all histograms between one random check, as described in \ref{sect:actions}, and the next one, define the training episode. This is independent of the actions of the checker (if the checker requests a check, an episode divide happens if and only if a random check would have occurred without the request of the checker). It should be noted that this definition of episode is merely a design choice and is not expected to significantly impact the results compared to other possible approaches to the same problem.

\subsection{Rewards} \label{sect:reward}
In this section, we describe the rewards for the predictor and checker agents. A conceptual overview of Reinforcement Learning rewards can be found in appendix \ref{sect:formal-setting}.

\paragraph{Predictor} If the shifter's label is available in the current state, the reward is $+1$ if the predictor’s decision matches the human label, and $-1$ otherwise. If the shifter's label is not available, the predictor's reward is zero.

\paragraph{Checker} If the checker did not request a check, the reward is zero. If the checker requested a check, the reward is $\omega-p$, where $\omega$ is for the predictor's \emph{mis-tagging ``probability’’} (defined below) and $p \in (0,1)$ is a hyperparameter (we set it to $0.1$) that regulates the amount of penalisation given to the checker for calling the shifter ``unnecessarily’’, i.e. when the predictor is doing well. The $\omega$ variable is a proxy of the probability that the predictor outputs the wrong label, using its own output. It is constructed using the two logits $(lp_\text{n}, lp_\text{a})$ in the predictor's output, for nominal and anomalous predictions respectively. We compute ``probabilities’’ by passing this vector through a softmax layer, then define $\omega$ as the ``probability’’ associated to the outcome that was not chosen by the shifter, i.e. $\omega = p_\text{a}$ if the shifter’s label is ``nominal’’, and $\omega = p_\text{n}$ otherwise. The checker is thus positively rewarded if it calls the shifter when the predictor's mis-tagging ``probability’’ for a given state is larger than $p$, and negatively rewarded otherwise.  It should be noted that, with this reward scheme, the checker could in principle adopt a policy in which it never asks for a check. Since all rewards would then be $0$, this policy would be stable. However, this collapse is prevented by the continuous exploration induced by the entropy term included in the PPO loss, as described in section~\ref{sect:loss}.

\subsection{Losses}\label{sect:loss}

The loss $L$ we use is a modification of the standard clipped PPO loss $L^{\text{clip}}$ described in appendix \ref{sect:PPO-final}. For a given agent, it consists of three terms, as in \cite[Equation (9)]{schulman2017proximal},
$L = L^{\text{clip}} - c_1 L^{\text{value}} + c_2 L^{\text{entropy}}$,
where the hyper-parameters $c_1$ and $c_2$ are non-negative real numbers. We use $c_1 = \frac 12$ and $c_2 = \frac 1{100}$ unless otherwise stated. The first term is simply the clipped loss used to update the parameters of the actor. The $L^{\text{value}}$ loss is the loss for the critics, equal to the squared difference between the prediction of the value network and the observed reward, $L^\text{value} = (V^t_\phi(S_t)-R_t)^2$, see appendix \ref{sect:PPO-final} for details. Finally, $L^{\text{entropy}}$ is the entropy of the policy on the current data point. This term is added to encourage exploration that may lead out of local minima. 

\subsection{Network update}\label{sect:training}

The procedure to update the actor and critic networks (see appendix \ref{sect:PPO-final}) makes use of the Adam optimizer \cite{Adam} with a learning rate of $10^{-4}$. A sequence of 10 update steps is done for each state.

\subsection{Training and testing scheme}\label{sect:training-testing}

The training of the algorithms in the different experiments will be done following the order of the sequence of data points, training on a single data point at a time and over the whole dataset in the experiment. The accuracy of the algorithm is evaluated on the ``future'' episode immediately after the current one. 
Unless explicitly stated, all the plots presented in the following sections use this method to compute accuracy.

To evaluate how the performance depends on the specific training trajectory, each experiment is repeated a certain number of times (10, unless otherwise specified), changing the seed of the random number generator used in the generation of the states and in the initialisation of the weights. As a result of the procedure, the plots shown in the following sections contain solid lines and error bands representing, respectively, the average and standard deviation over experiments. A binning of 20 episodes is used when computing those statistics in all the plots except for the ones in Section~\ref{sec:exp_data_augmentation}, where they are computed for single episodes.
\section{Experiments in the offline regime}
\label{sec:offline}

In the following subsections, we study the algorithm's capacity to adapt to changing conditions (Section~\ref{sec:exp_changing_conditions}) and to improve the classification accuracy over a noisy shifter's baseline. To study the improvement over a noisy baseline, we first investigate a simpler setup, in which the shifter's labels are independent of the algorithm's predictions (Section~\ref{sec:exp_superhuman}). We then explore the case in which the shifter randomly decides to follow the decision of the algorithm, with the probability of this decision being dependent on some estimate of the predictor's accuracy (Section~\ref{sec:exp_superhuman_with_human}). Finally, we study the improvements brought to the training by a data-augmentation technique (Section~\ref{sec:exp_data_augmentation}).

\subsection{Capacity of the algorithm to adapt to changing conditions}
\label{sec:exp_changing_conditions}

For this experiment, we generated a sequence of 2000 states. The type of nominal distribution is changed from the 1000\textsuperscript{th} state onwards (the specific distributions and parameters used are summarized in Table~\ref{tab:distribution_params_setup1} of Appendix~\ref{sec:dataset}). The evolution of the accuracy over the sequence of states (computed as explained in Section~\ref{sect:episodes}) is shown in Figure~\ref{fig:changing_accu}. As expected, the accuracy first increases as the algorithm learns to distinguish the initial set of nominal and anomalous distributions. The accuracy drops at state number 1000 due to the change in conditions, and finally recovers after adapting to the new situation.

\begin{figure}
  \centering
  \includegraphics[width=0.5\linewidth]{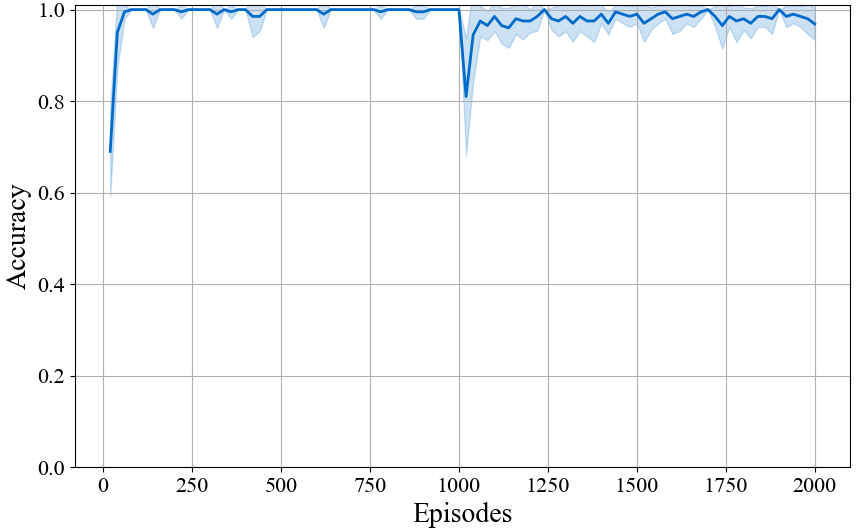}
  \caption{Results of the experiment in which the capacity to adapt to changing conditions is tested.%
  }
  \label{fig:changing_accu}
 \end{figure} 

\subsection{Ability to learn from noisy labels}
\label{sec:exp_superhuman}

In this experiment, we generate noisy shifter labels by randomly switching the ground truth of mode of operation with a probability of $30\%$, to emulate a noisy baseline accuracy of $70\%$. We generate a sequence of 2000 states with time-independent nominal and anomalous distributions (details in Table~\ref{tab:distribution_params_setup2} of Appendix~\ref{sec:dataset}) and evaluate the evolution of the algorithm's training using these noisy labels. The resulting accuracy of the algorithm's labels, judged against the ground truth of the modes of operation, is shown in Figure~\ref{fig:superhuman_condition}. It is compared to the baseline accuracy of the training labels. The algorithm's accuracy surpasses the baseline, demonstrating the ability of the algorithm to learn signal from noisy labels without learning the noise, a classically observed capability of machine learning algorithms \cite{conf/nips/NatarajanDRT13}.

\subsection{Human-machine mutual feedback}
\label{sec:exp_superhuman_with_human}
The previous experiment demonstrates the ability of our algorithm to improve over a noisy baseline, as is expected based on the machine learning literature \cite{conf/nips/NatarajanDRT13}. We now investigate what happens if the training data, i.e. the labels of the shifters, are influenced by previous outputs of the algorithm. In the real world, this will happen because shifters will have access to the output of the algorithm when making their decision.

For a simplified demonstration, we generate a sequence of 2000 states under the same conditions as in Section~\ref{sec:exp_superhuman}, except that we modify the decision-making of the shifter as follows. First, the labels of the shifter are a noisy version of the ground truth, with $30\%$ of labels randomly changed, as in Section \ref{sec:exp_superhuman}. However, the shifter now gets access to the label provided by the algorithm, as well as a proxy for the probability with which the algorithm thinks it is correct, $p^{\text{proxy}}_{\text{correct}}$, here taken to be the softmax of the predictor's output logit with the highest value (see Section~\ref{sect:reward} for a similar discussion). The shifter now randomly decides to follow the current output of the algorithm, with a probability that depends on $p^{\text{proxy}}_{\text{correct}}$ (details in Appendix \ref{sec:trust_function}). The predictor is trained with these changed shifter labels. In a further line of research, we may aim to make the model more realistic by having the shifters confidence in their decision also influence the final decision, possibly by having the shifter perform Bayesian inference.

If the shifter were to never change their decision, this experiment would be identical to the one in Section \ref{sec:exp_superhuman}. If the shifter always followed the decision of the algorithm, there would be no further training of the algorithm.

The results are presented in Figure~\ref{fig:super_human_with_trust}. They demonstrate that in the current setup, the algorithm still trains successfully and is able to improve over the noisy baseline, despite a part of its training data having been generated by a previous version of the algorithm.

\begin{figure}
    \centering
    \begin{subfigure}[t]{0.5\textwidth}
        \centering
        \includegraphics[width=0.99\linewidth]{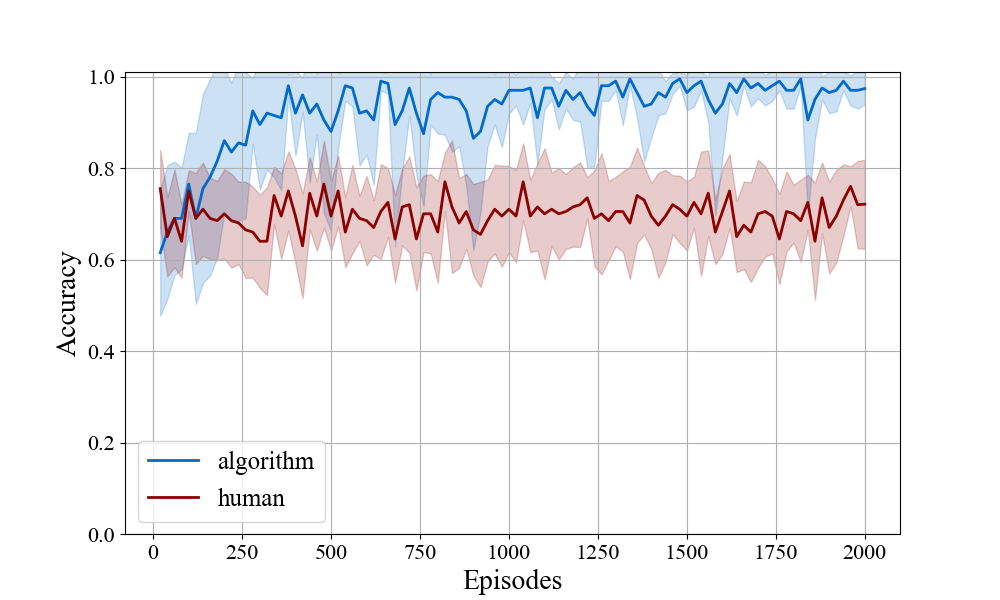}
        \caption{Shifter decisions before seeing algorithm outputs.}
        \label{fig:superhuman_condition}
    \end{subfigure}%
    ~ 
    \begin{subfigure}[t]{0.5\textwidth}
        \centering
        \includegraphics[width=0.99\linewidth]{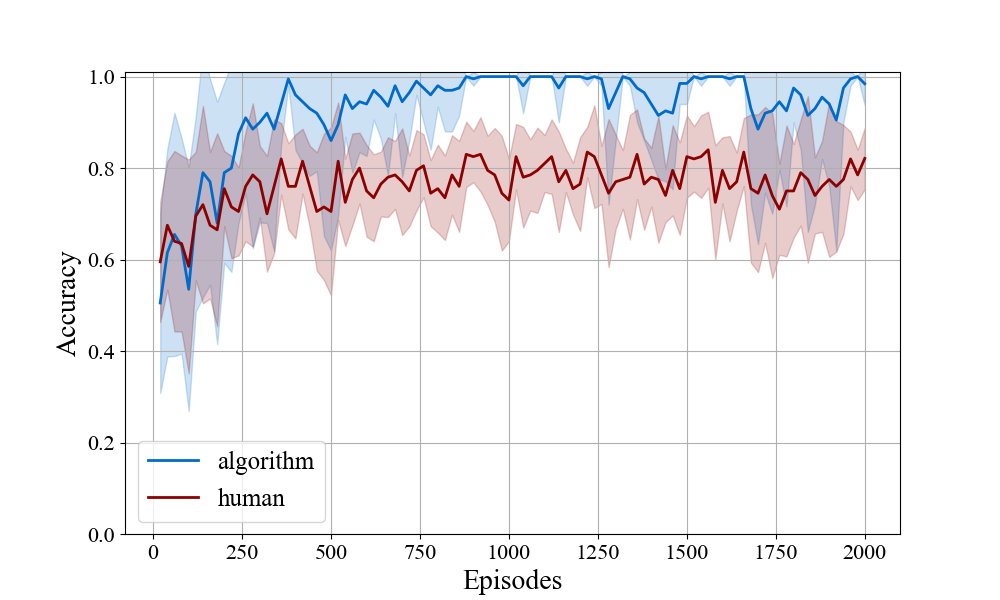}
        \caption{Shifter decisions after seeing algorithm outputs.}
        \label{fig:super_human_with_trust}
    \end{subfigure}
    \caption{Results of the experiments in the offline regime with noisy shifter labels.}
\end{figure}

\subsection{Data augmentation}
\label{sec:exp_data_augmentation}

In real situations, the training data is scarce. This causes two main problems. First, if the dataset is very imbalanced and only a small number of anomalous examples is available, the algorithm may overfit on them and may fail to generalize to other types of anomalies. Second, the number of iterations required by the algorithm to adapt to new conditions combined with the fact that the data points are collected at fixed time intervals translates into an effective time window that may be too large in cases where quick adaptation is required. 

To tackle both challenges, we propose a data-augmentation approach wherein we insert a fixed number of artificially generated data points right after each real data point. Our algorithm is described in Appendix~\ref{sec:augmentation}: it generates histograms based on an automatically built reference for the nominal state and on a large set of predefined types of anomalous deviations. The hyperparameters that control the data augmentation process and their chosen values for the current setup are gathered in Table~\ref{tab:data_augmentation_params}.

\begin{wrapfigure}{r}{0.5\textwidth}
  \centering
  \includegraphics[width=\linewidth]{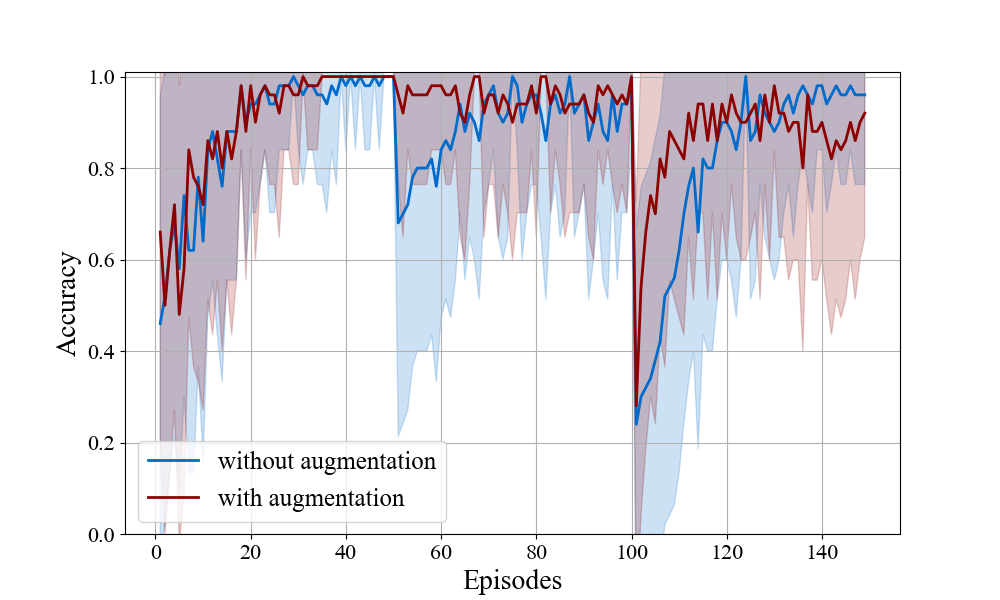}
  \caption{Accuracy (computed on the non-augmented dataset) of the algorithm trained on the dataset augmented with our approach, resp. not augmented.}
  \label{fig:data_augmentation}
\end{wrapfigure}

To test the procedure, we generate a sequence of 150 episodes, changing the type of anomalous distribution after the first 50 episodes and then the type of nominal distribution after the first 100 episodes. The labels provided by the shifter match the ground truth. The specific distributions and parameters used are gathered in Table~\ref{tab:distribution_params_setup3}. We compare the training in two different cases, with and without data augmentation. The experiments are run 50 times instead of 10 to allow for a better comparison of the shape of the two learning curves. In addition, the entropy coefficient (see Section \ref{sect:loss}) is changed from 0.01 to 1, to encourage exploration during training. We implemented this change after observing a collapse of the policy when abruptly changing nominal conditions with data augmentation; we deem that this collapse originates from a policy overspecialisation in the augmented dataset.

The results are shown in Figure~\ref{fig:data_augmentation}. We first observe that the data augmentation prevents the drop in the accuracy after 50 episodes, showing that it helps the training to generalise to different types of anomalies. Secondly, the learning curve after 100 episodes is visibly steeper when data augmentation is used than when it is not, showing that our data augmentation process can speed up the adaptation of the algorithm to new conditions. It should be noted however that the addition of artificial anomalies obliges the algorithm to solve a more complicated problem than the original one. This would tend to increase the learning time, going against the original goal, but the situation can be solved by an adequate hyper-parameter tuning, as in the case shown here.

\section{Experiments in the online regime}
\label{sec:online}

We generate 2000 episodes, changing the conditions after the first 1000. The distributions and parameters used (same as in Section~\ref{sec:exp_changing_conditions}) can be found in Table~\ref{tab:distribution_params_setup1} of Appendix~\ref{sec:dataset}. The shifter’s labels, when provided, match the ground truth. 
The estimated accuracy for the predictor is shown in Figure~\ref{fig:online_predictor}, while the fraction of times the checker (estimated in a similar manner as the accuracy) calls the agent is shown in Figure~\ref{fig:online_checker}. 

The predictor's accuracy follows the same type of behaviour as shown in Figure~\ref{fig:changing_accu}, which demonstrates that it trains well and adapts successfully to changing conditions. We observe that the checker calls the shifter mostly at two points in time: at the beginning, when the predictor is not yet trained and its accuracy is consequently poor, and right after the change in conditions, when the predictor's accuracy drops until it learns again. If the accuracy of the predictor is high, the shifter is rarely called. 

\begin{figure}
    \centering
    \begin{subfigure}[t]{0.5\textwidth}
        \centering
        \includegraphics[width=0.95\linewidth]{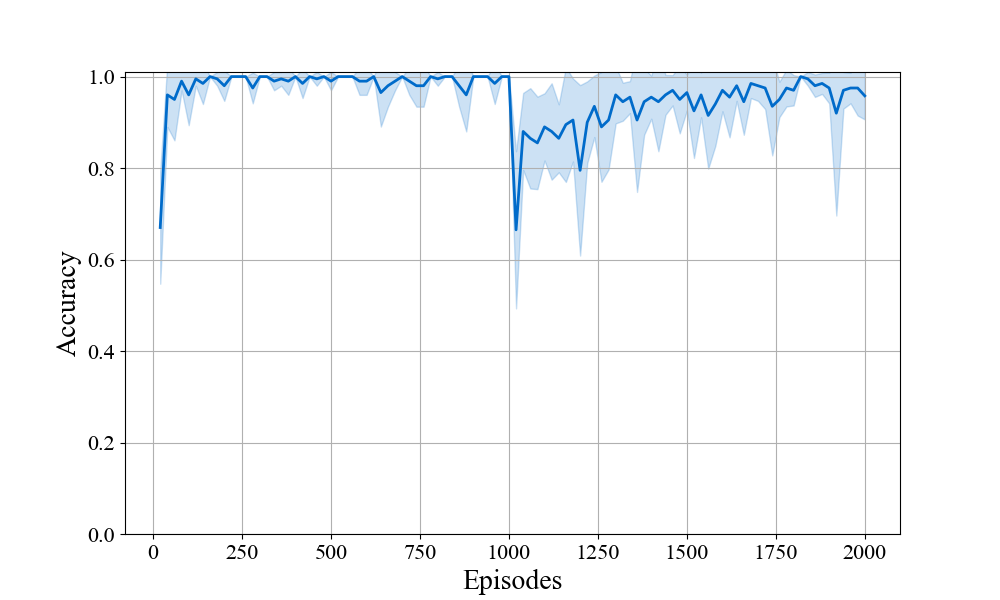}
        \caption{Predictor's accuracy.}
        \label{fig:online_predictor}
    \end{subfigure}%
    ~ 
    \begin{subfigure}[t]{0.5\textwidth}
        \centering
        \includegraphics[width=0.95\linewidth]{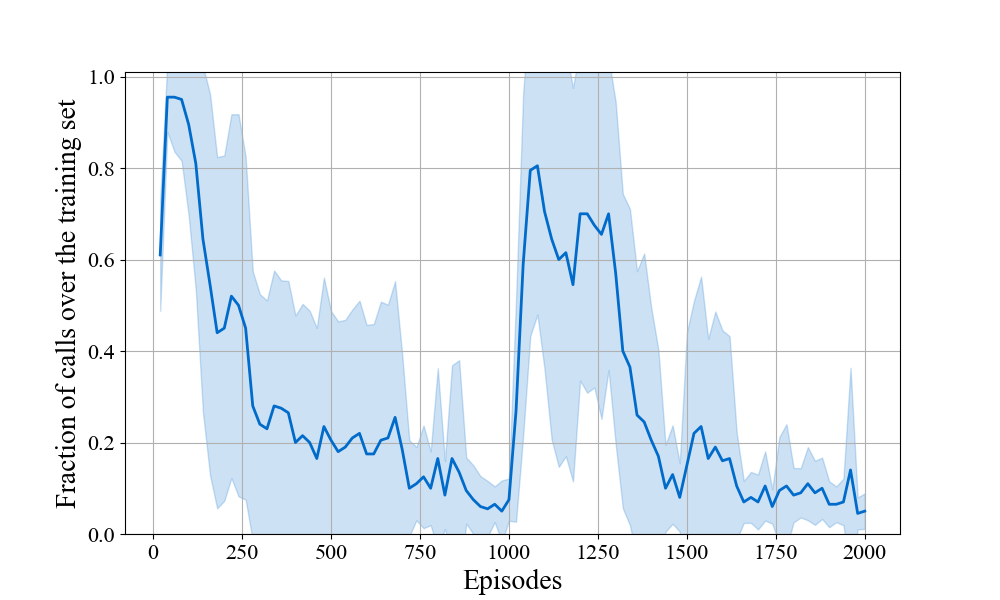}
        \caption{Checker's actions, evaluated on the training steps.}
        \label{fig:online_checker}
    \end{subfigure}
    \caption{Results of the experiment in the online regime.}
\end{figure}

\section{Discussion}
\label{sec:discussion}

\subsection{General discussion}

The experiment presented in Section~\ref{sec:exp_superhuman_with_human} shows how the RLHF approach can be used to perform a continuous training of both algorithm and shifters based on mutual feedback, resulting in an increased accuracy of the data classification compared to a shifter-only baseline. In that section, we describe a setup in which the shifter can judge how much to trust the algorithm's prediction based on the logits of its output for the current state. An alternative approach that is expected to give similar results would consist on keeping a record of the average accuracy achieved by the algorithm in the recent past (measured against the past shifter's labels), and use that as the $p_{\text{corr}}^{\text{proxy}}$ variable.

Section~\ref{sec:exp_changing_conditions} shows that the algorithm can adapt to changes in the function of the machine. We expect that in practical cases, with a careful choice of hyperparameters, the obtained accuracy will be at least as good as the one obtained if training the algorithm from scratch.
In addition, Section~\ref{sec:exp_data_augmentation} shows that the speed of that adaptation can be improved with data-augmentation techniques. Those types of techniques serve the additional purpose of compensating for low sample efficiency in RL algorithms (see \cite{Kain:2020vjs} for a discussion), adding robustness to the training in case of data scarcity. While showing good potential, the type of data-augmentation technique used in this study is quite simplistic, and can be improved and generalised in future works.

The study in Section~\ref{sec:online} illustrate how RL can increase the level of automation of DQM beyond a simple classification task, by including the cost in terms of person power in the overall optimisation goal. Our example shows a setup in which the algorithm can make automatic and accurate decisions at the maximum possible rate (every time the monitoring histograms are produced), while requiring intervention from the shifter only when the algorithm is not confident enough on its decision, i.e. before it is sufficiently trained or when the operating conditions have changed. The multi-agent setup could be expanded in the future, to handle a scenario in which many of the different interconnected actions required for the control of the data collection in an experiment would be automated and optimised jointly. This would require an investigation of how to construct a reward scheme from an evaluation of the relative costs of the actions and outcomes. In that regard, while the current paper does not discuss different possible types of reward scheme, one could imagine an alternative approach in which the reward is proportional to the final amount of data collected in nominal conditions, modified by some penalisation related to an estimated cost of the different possible actions.

The studies with noisy labels and data augmentation have only been performed in the offline case. Applying noise to the labels in the online case is expected to give similar results to the experiment in Section~\ref{sec:exp_superhuman}. The usage of data augmentation techniques is however more challenging than in the offline case, as the time dependence between successive states needs to be taken into account. We leave this for future work.

We note that the computational time requirements for algorithm training, while depending on many factors such as the hardware architecture (see Section~\ref{sec:resources}) and the data augmentation scheme, are low in both regimes. In the current studies, each training iteration was always found to be below 1 s, except for the case with the data augmentation, where it takes about 2 s. Assuming the DQM histograms are typically generated at fixed intervals of at least 5 min, training time is not expected to be a showstopper for the proposed approach.

The studies presented in this paper are not specific to a particular experimental setup, and the approach is therefore general, with potential application in DQM problems within or outside of particle physics.

\subsection{Limitations}
\label{sec:limitations}

The main challenge that we foresee when bringing the presented approach to reality concerns the robustness of the training against shifters' biases. These biases may arise from a mismatch between the average knowledge of the shifters and that of the small set of detector experts in the experiment. One way to address this is to build ``reference'' histograms that reflect what a ``nominal'' histogram would look like based on the shifters' decision, and then have these references checked with a certain frequency by detector experts so that biases in the shifters' decisions can be addressed. The construction of such a reference is already part of our data augmentation approach.

\section{Conclusions}
\label{sec:conclusions}

This letter proposes for the first time the application of RLHL techniques to perform DQM in large particle physics experiments, aiming to provide a proof-of-concept for an automated monitoring system that adapts to changing conditions and trains continuously with human shifters via mutual feedback. We describe the main differences between two types of operational regimes, online and offline, and prove that our proposed algorithms perform well for both of them based on a simplified synthetic dataset. We propose a data augmentation technique to improve the data-sample efficiency and show its effectiveness. Our studies show that the proposed algorithm automatically adapts to changing conditions, can improve the accuracy on the detection of anomalies by reducing random unbiased fluctuations in the shifter's decisions and can balance out accuracy with the need for shifter intervention in the online regime. Besides providing the foundational ground for applications in particle physics experiments, this work opens the door for future research on complex human--machine interactions in control rooms, with the potential of greatly optimising the human costs of operation while achieving super-human performance.

\section*{Acknowledgements}

The work of OJP has been sponsored and supported by the CERN openlab (openlab.cern). TL has been partially supported by the SNSF Advanced Grant {\it TMAG-2\_209263}. This publication is part of SK's project \textit{Uncovering the lepton generation gap} (with project number VI.Veni.202.004 of the research programme Veni which is partly financed by the Dutch Research Council (NWO).

\section*{Authorship declaration}

OJP contributed to the original idea, participated in the design of the algorithms, proposed the specific RL architecture, co-proposed the multi-agent setup, carried out most of the software development in collaboration with JGP and LDPP, participated in the design of the experiments and contributed to the writing of sections 3, 4, 5 and 6. JGP started and coordinated the project, directly supervised the work of OJP and LDPP, contributed to the original idea, designed the experiments, participated in the software development and reviewing, proposed and developed the data-augmentation scheme, designed the structure of the paper and contributed to its general writing and reviewing. LDPP participated in the design phase of the algorithm and in the software development, carrying out the initial version of several of the offline regime studies as part of his M.Sc. thesis project, under the supervision of JGP and in collaboration with OJP. MJ contributed to the drafting and further versions of sections 3 and 4, wrote the algorithm description in the appendix, reviewed and modified in detail sections 5 and 6 as well as appendix A.2, drafted the first version of a formalization of the problem statement, participated in the discussion, proposal and reviewing of reward schemes, contributed to the publication strategy. SK provided expert knowledge throughout the project, participating in the conceptual design of the algorithm’s operational scheme and the configuration of the online and offline setups. TL co-proposed the multi-agent architecture, proposed the reward scheme that led to the final algorithm, contributed to the investigation of the superhuman performance in the presence of noise, reviewed the PPO policy software, wrote the final version of the abstract as well as sections 3 and 4 of the paper and contributed to the writing of the rest of the paper. NS contributed to the original idea, participated in the design phase of the algorithm, co-suggested the use of reinforcement learning together with JGP, co-proposed testing super-human conditions, helped in identifying challenges and experiments, supervised OJP.

\newpage

\bibliographystyle{naturemag}

\newpage

\appendix

\section{Appendix}
\label{sec:appendix}

\subsection{Synthetic dataset}
\label{sec:dataset}

The synthetic data points used in the different experiments in this paper take the form of one-dimensional histograms of 100 bins each. The histograms are generated assuming a certain underlying distribution, $f$, that we model with either Gaussian, $f_g$, or exponential, $f_e$, functions, parameterised as follows:
\begin{equation}
f_g(x)=c_{g1}\cdot\exp{\left(\frac{x-c_{g2}}{2c_{g3}^2}\right)},
\end{equation}
\begin{equation}
f_e(x)=c_{e1}\cdot\exp{(c_{e2}x)},
\end{equation}
with $c_{g1}$, $c_{g2}$, $c_{g3}$, $c_{e1}$ and $c_{e2}$ being constant parameters. We fix $c_{g1}=10$ and $c_{e1}=1$ in all the cases, while the other parameters are set as explained below. The generation of each histogram is done bin by bin, using the function's value at the bin's center and adding a random Gaussian fluctuation to its value aimed at emulating stochastic effects in real data, such as statistical fluctuations related to the number of counts. For the experiments in this paper, the Gaussian noise uses a constant width of 0.01 for all the bins. Either the Gaussian or exponential distributions can represent a nominal or an anomalous state within a given experiment. If they are representing a nominal state, the distribution shape parameters will remain fixed. If they are representing an anomalous state, some of their values will be sampled from a certain range. The values chosen for all the experiments described in the paper are written in Table~\ref{tab:distribution_params_setup1}, Table~\ref{tab:distribution_params_setup2} and Table~\ref{tab:distribution_params_setup3}.

\begin{table}
  \caption{Distributions used for the experiments in Section~\ref{sec:exp_changing_conditions}and Section~\ref{sec:online}.}
  \label{tab:distribution_params_setup1}
  \centering
  \begin{tabular}{ll|c|c}
    \toprule
    \multicolumn{2}{c|}{Episodes} & $[0,999]$ & $[1000,1999]$ \\
    \toprule
    \multirow{2}{*}{Nominal distribution} & Function & $f_e$ & $f_g$ \\
    & Parameters & $c_{2e}=-0.0002$ & $c_{g2}=10$, $c_{g3}=30$ \\
    \midrule
    \multirow{2}{*}{Anomalous distribution} & Function & \multicolumn{2}{c}{$f_g$} \\
    & Parameters & \multicolumn{2}{c}{$c_{g2}\in[10,80]$, $c_{g3}=30$} \\
    \bottomrule
  \end{tabular}
\end{table}

\begin{table}
  \caption{Distributions used for the experiment in Section~\ref{sec:exp_superhuman} and Section~\ref{sec:exp_superhuman_with_human}.}
  \label{tab:distribution_params_setup2}
  \centering
  \begin{tabular}{ll|c}
    \toprule
    \multicolumn{2}{c|}{Episodes} & $[0,1999]$ \\
    \toprule
    \multirow{2}{*}{Nominal distribution} & Function & $f_g$ \\
    & Parameters & $c_{g2}=10$, $c_{g3}=30$ \\
    \midrule
    \multirow{2}{*}{Anomalous distribution} & Function & $f_e$ \\
    & Parameters & $c_{e2}\in[-0.005,0.005]$ \\
    \bottomrule
  \end{tabular}
\end{table}

\begin{table}
  \caption{Distributions used for the experiment in Section~\ref{sec:exp_data_augmentation}.}
  \label{tab:distribution_params_setup3}
  \centering
  \begin{tabular}{ll|c|c|c}
    \toprule
    \multicolumn{2}{c|}{Episodes} & $[0,49]$ & $[50,99]$ & $[100,149]$ \\
    \toprule
    \multirow{2}{*}{Nominal distribution} & Function & \multicolumn{2}{c|}{$f_g$} & $f_e$ \\
    & Parameters & \multicolumn{2}{c|}{$c_{g2}=10$, $c_{g3}=30$} & $c_{2e}=0.0002$ \\
    \midrule
    \multirow{2}{*}{Anomalous distribution} & Function & $f_e$ & \multicolumn{2}{c}{$f_g$} \\
    & Parameters & $c_{e2}\in[-0.005,0.005]$ & \multicolumn{2}{c}{$c_{g2}\in[0,100]$, $c_{g3}=40$} \\
    \bottomrule
  \end{tabular}
\end{table}

\subsection{Trust function}
\label{sec:trust_function}
In this appendix, we describe the function $f$ which assigns to $p_{\text{correct}}^{\text{proxy}}$, as introduced in Section~\ref{sec:exp_superhuman_with_human}, the probability with which a shifter having access to the output of the algorithm will randomly decide to replace their decision with that of the algorithm.

In general, it seems reasonable to assume that this function is monotonically increasing. The shifters may not use the decisions of the algorithm at all if $p_{\text{correct}}^{\text{proxy}}$ is very low. Furthermore, in order to allow for shifters to not strictly follow the algorithm even if the latter is very confident in its classification (that is, when $p_{\text{correct}}^{\text{proxy}}\approx 1$), we saturate the probability of having a shifter replace its decision by that of the algorithm at some $s<1$. For our demonstration, we choose the function $f$ as follows:
\begin{equation}
    f(p_{\text{correct}}^{\text{proxy}})=\frac{s}{1+\exp(-(p_{\text{correct}}^{\text{proxy}}-p_0)/w)},
\end{equation}
where $s$ is the value to which the probability saturates, $p_0$ is the point at which the majority of shifters adapt the decision of the algorithm, and $w$ is a parameter that controls the width of the region in which the probability increases in a quasi-linear way. The heuristically chosen values for those parameters are written up in Table~\ref{tab:trust_function}, and the resulting function is shown in Figure~\ref{fig:trust_fuction}.

\begin{table}
  \caption{Parameters used in the trust function.}
  \label{tab:trust_function}
  \centering
  \begin{tabular}{ll}
    \toprule
    Name     & Chosen value \\
    \midrule
    $s$ & $0.95$     \\
    $p_0$ & $0.7$     \\
    $w$ & $0.02$     \\
    \bottomrule
  \end{tabular}
\end{table}

\begin{figure}
  \centering
  \includegraphics[width=0.6\linewidth]{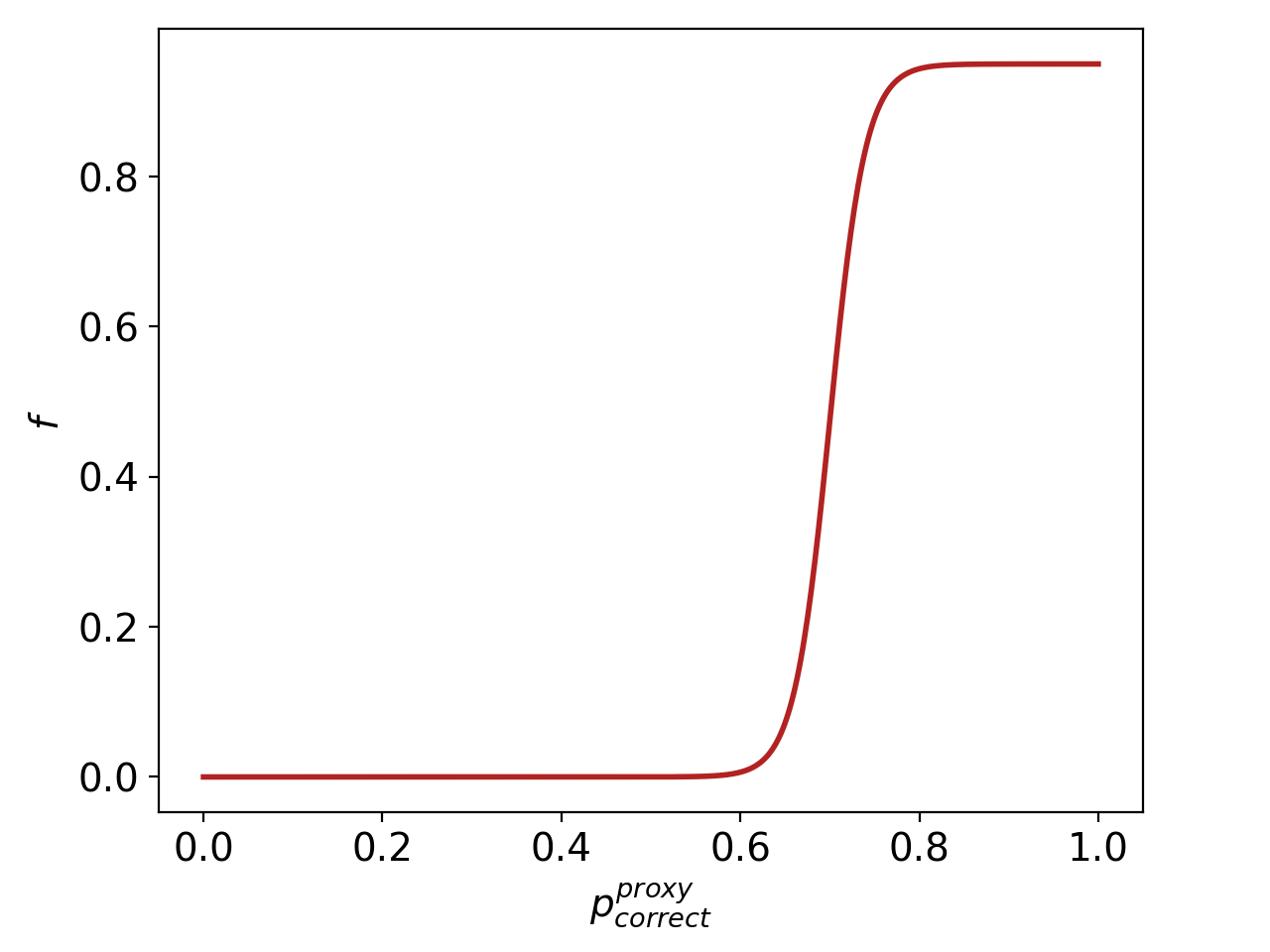}
  \caption{Function used to model the shifter's probability to trust the algorithm.}
  \label{fig:trust_fuction}
\end{figure}

\subsection{Data augmentation algorithm}
\label{sec:augmentation}

This section describes the process to generate an artificial histogram, represented by the input vector $z_{bin}$ and the corresponding artificial label, nominal or anomalous. The procedure includes the construction and continuous update of a reference histogram, represented by a vector of bin centers, $\mu_{bin}$, and a vector of bin uncertainties, $\sigma_{bin}$. This histogram takes information from the input vector, $x_{bin}$, of the original histograms in the training dataset that are labelled as nominal. The update of the reference histogram over time is based on the Exponentially Weighted Moving Average (EWMA) approach~\cite{264b1310-db27-36ce-b68f-3e439b03c2b5}, to allow the adaptation to changing conditions over time.

The structure of the algorithm is described in Algorithm~\ref{alg:data_augmentation} and its hyperparameters are described in Table~\ref{tab:data_augmentation_params}. Two different parameters, $\alpha_{\mu}$ and $\alpha_{\sigma}$, are used to control separately the time evolution of the bin centers and uncertainties of the reference histogram. Adjusting their values allows to change the relative impact on the reference histogram of more recent nominal histograms in the training dataset compared to older ones. The data augmentation process is started only after a certain number of training steps, to allow time for the reference template to sufficiently stabilise. All the parameters have been moderately tuned by hand for the experiment described in Section~\ref{sec:exp_data_augmentation}. An in-detail hyperparameter optimisation is out of the scope of this paper.

\begin{table}
  \caption{Parameters used in the data augmentation algorithm.}
  \label{tab:data_augmentation_params}
  \centering
  \begin{tabular}{lll}
    \toprule
    Name     & Description     & Chosen value \\
    \midrule
    $\alpha_{\mu}$ & EWMA parameter for bin centers  & $0.99$     \\
    $\alpha_{\sigma}$ & EWMA parameter for bin uncertainties  & $0.5$     \\
    $AugmStartPoint$ & Number of the first training step with augmentation  & $30$     \\
    $AugmFactor$ & Number of augmented states for each original one  & $499$     \\
    $AugmAnomProb$ & Probability that an augmented state is an anomaly  & $50\%$     \\
    $MinNumVars$ & Minimum number of variations  & $1$     \\
    $MaxNumVars$ & Maximum number of variations  & $50$     \\
    $MinVarSize$ & Minimum number of bins in each variation & $1$     \\
    $MaxVarSize$ & Maximum number of bins in each variation & $100$     \\
    $MinNumStdDev$ & Minimum number of standard deviations & $5$     \\
    $MaxNumStdDev$ & Maximum number of standard deviations & $20$     \\
    \bottomrule
  \end{tabular}
\end{table}

\begin{algorithm}[H]
\caption{Training with data augmentation}\label{alg:data_augmentation}
\begin{algorithmic}
\State Initialise $\mu_{bin}$ to 0
\State Initialise $\sigma_{bin}$ to 1
\For{each histogram in training dataset}
    \State Do one training step using $x_{bin}$ and its label
    \If{the current histogram is nominal}
    \Comment{Update the reference histogram}
    \State $\mu_{bin}\gets\alpha_{\mu}\cdot x_{bin}+(1-\alpha_{\mu})\cdot\mu_{bin}$
    \State $\sigma_{bin}\gets\sqrt{\alpha_{\sigma}\cdot(x_{bin}-\mu_{bin})^2+(1-\alpha_{\sigma})\cdot\sigma_{bin}^2}$
    \EndIf
    \If{the training step is $\geq AugmStartPoint$}
    \Comment{Start the data-augmentation process}
    \For{$AugmF actor$ iterations}
    \State Generate an augmented histogram $z_{bin}\sim N(\mu_{bin},\sigma_{bin})$
    \State Sample whether the state will be an anomaly, with probability $AugmAnomProb$
    \State Set the histogram label according to whether it is anomalous or not
    \If{the state is an anomaly}
    \For{a number of iterations in $[MinNumVars,MaxNumVars]$}
    \State Sample an integer variation size in $[MinVarSize,MaxVarSize]$
    \State Sample the position of the first (left-most) bin in the variation
    \State Sample a real scaling factor, $SF$, in $[MinNumStdDev,MaxNumStdDev]$
    \State Randomly make the $SF$ negative with a 50\% probability
    \State Modify every bin content in the variation, $bin'$, as $z_{bin'}\gets z_{bin'}+SF\cdot\sigma_{bin'}$
    \EndFor
    \EndIf
    \State Do one training step using $z_{bin}$ and its label
    \EndFor
    \EndIf
\EndFor
\end{algorithmic}
\end{algorithm}

\subsection{Review of Reinforcement Learning and Proximal Policy Optimization}\label{sect:appendix-PPO}
In this section we review the basic theory of Proximal Policy Optimization, starting with an elementary recall of Reinforcement Learning, going over basic algorithms aiming to find optimal policies, so-called Policy Gradient Methods, and finally introducing Proximal Policy Optimization.

\subsubsection{Formal setting for reinforcement learning}\label{sect:formal-setting}
Recall that a reinforcement learning agent is tasked with maximizing a reward in a Markov decision process. A \emph{Markov decision process} is a tuple $(\kS,\kA,\rho,\P_0)$ consisting of a measurable space $\kS$, called the set of \emph{states}, as well as a measurable space $\kA$, called the set of \emph{actions}. $\rho$ is a Markovian transition kernel from $\kS\times\kA$ to $\R\times\kS$, called the \emph{environment dynamics}, such that for every $s\in\kS$ and $a\in\kA$, $\rho(\cdot,\cdot\mid s, a)$ is a probability measure on $\R\times\kS$ defining the joint probability distribution of the reward and the next state if one starts in state $s$ and performs action $a$. $\P_0$ is a probability distribution on $\kS$. We start at a state $S_0$ which is $\P_0$-distributed. 

At every time $t\in\N$, the agent performs an action according to a \emph{policy} $\pi_t$. The policy $\pi_t$ is a Markovian transition kernel from $\kS$ to $\kA$. For every $s\in\kS$, $\pi_t(\cdot\mid s)$ defines the probability distribution of which action to take if at time $t$ one is in state $s$.

For $t\in\Z_{\ge 0}$, we define recursively random variables $A_t, R_t, S_{t+1}$ which satisfy that the conditional distribution of $A_t$ given $S_t$ is $\pi_t(\cdot\mid S_t)$ and that
\begin{equation}
(R_t, S_{t+1})\mid S_t, A_t\sim \rho(\cdot, \cdot\mid S_t, A_t).
\end{equation}
\begin{remark}[Formal Definition of the $A_t, R_t, S_{t+1}$]
    To give a formal Definition, we may proceed as follows: We fix first a probability space with two families of random variables $(X_{t,s})_{t\in\Z_{\ge 0}, s\in\kS}$ and $(Y_{t,s,a})_{t\in\Z_{\ge 0}, s\in\kS, a\in\kA}$ which we assume to be jointly independent, and which we assume to satisfy $X_{t,s}\sim \pi_t(\cdot\mid s)$ as well as $Y_{t,s,a}\sim\rho(\cdot, \cdot\mid s,a)$ for all $t\in\Z_{\ge 0}, a\in\kA, s\in\kS$. (The existence of such a space is guaranteed for instance by the Andersen-Jessen Theorem.) We then set, recursively,
    \begin{equation}
        A_t\define X_{t, S_t}\qquad (R_t, S_{t+1})\define Y_{t, S_t, A_t}
    \end{equation}
    for $t\in\Z_{\ge 0}$.
\end{remark}
The random variable $A_t$ is the action one takes at time $t$, while $S_t$ is the state at time $t$ and $R_t$ is the reward obtained at time $t$.

Note that all of these random variables depend on the choice of the policies $\pi_t$. The goal of the agent is now to choose policies which maximize the cumulative expected reward
\begin{equation}
    \sum_{t\in\N} \gamma^t \E(R_t),
\end{equation}
where $\gamma\in\R_{\ge 0}$ is a so-called \emph{discounting factor}.

\begin{definition}[State value function]
    The \emph{state value function at time $t$} assigns to a state $s\in\kS$ and time $t\in\N$ the expected future cumulative if we are in this state:
    \begin{equation}
        V(t, s)\define \sum_{t\in\N} \gamma^t\E(R_t\mid S_t = s).
    \end{equation}
    (In a finite horizon setting we may only sum over all $t$ up to some time $T\in\N$.)
\end{definition}

Summarizing, there are two main parts of a reinforcement learning algorithm:
\begin{description}
    \item[Policy] The policies $\pi_t$ defining what action to take at time $t$ for each state;
    \item[State value function] The state value function $V$ giving the expected future cumulative reward at time $t$ for each state.
\end{description}

\subsubsection{Policy Gradient Methods}\label{sect:PGM}
To illustrate how optimization of the policies proceeds, we first describe so-called \emph{Policy Gradient Methods (PGMs)}. PGMs have a parametrized family of policies $(\pi^\theta_t)_{\theta\in\Theta}$, where $\Theta$ is some index set. The goal of PGMs is to update the parameters $\theta$ in the direction of the gradient of the expected return, thus optimizing the parameters of the policy using a form of gradient ascent. More formally, note that $R_t$ from Section \ref{sect:formal-setting} can be considered a function of $\theta$. PGM algorithms such as REINFORCE, introduced by \cite{Williams:92}, iterate by performing policy ascent. Starting at some $\theta_0\in\Theta$, one then iterates as follows: 
\begin{equation}
    \theta_{k+1}=\theta_k + \alpha\nabla_\theta \E(R_t),
\end{equation}
where $\alpha\in\R_{\ge 0}$ is the \emph{learning rate}.

\begin{remark}[Summing over time]\label{rem:summing}
    In the current formulation, we are optimizing the policy of a single time step $t$ only. One can formulate the same idea by parameter-sharing across all policies $(\pi_t)_{t\in\Z_{\ge0}}$ and taking the derivative of a discounted reward $\sum_{t\in\Z_{\ge 0}} \gamma^t R_t$ or a finite-horizon reward $\sum_{t=0}^T R_t$ instead. For ease of exposition, we limit ourselves to one time step but this Remark is valid throughout the rest of the section.
\end{remark}

In order to compute $\nabla_\theta\E(R_t)$, REINFORCE employs the following Lemma.

\begin{lemma}[Log-derivative trick]\label{lem:log-derivative trick}
    Let $(p^\theta)_{\theta\in\Theta}$ be a family of probability densities with respect to a $\sigma$-finite reference measure $\mu$ on some measurable space $\Omega$ indexed by some open set $\Theta\subset\R^n, n\in\N$, and denote by $\P^\theta$ the corresponding probability measures on $\Omega$. Let $f:\Omega\to\R$ be a measurable function. Then, if the family of functions $x\mapsto p^\theta(x) f(x)$ is sufficiently regular for the Leibniz rule to hold,\footnote{We omit a further discussion on what conditions are sufficient for this statement.} we have
    \begin{equation}
        \nabla_\theta \E^{\P^\theta}(f) = \E^{\P^\theta} (\nabla_\theta \ln(p^\theta) f),
    \end{equation}
    where $\E^{\P^\theta}$ denotes the expectation with respect to $\P^\theta$.
\end{lemma}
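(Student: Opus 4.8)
The plan is to prove the log-derivative trick by a direct computation that moves the gradient inside the integral and then recognizes $\nabla_\theta p^\theta$ as $p^\theta \nabla_\theta \ln p^\theta$. First I would write out the expectation as an integral against the reference measure: $\E^{\P^\theta}(f) = \int_\Omega f(x)\, p^\theta(x)\, d\mu(x)$. This is the definition of expectation of $f$ under the measure $\P^\theta$ which has density $p^\theta$ with respect to $\mu$.

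Next I would apply the gradient $\nabla_\theta$ to both sides. Under the stated regularity hypothesis — precisely the hypothesis that allows the Leibniz integral rule (differentiation under the integral sign) to be applied to $x \mapsto p^\theta(x) f(x)$ — we may interchange $\nabla_\theta$ and $\int_\Omega \cdot\, d\mu$, obtaining $\nabla_\theta \E^{\P^\theta}(f) = \int_\Omega f(x)\, \nabla_\theta p^\theta(x)\, d\mu(x)$. Here I am treating the components of the gradient one at a time, so this is really $n$ applications of the one-dimensional Leibniz rule.

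Then comes the key algebraic identity: wherever $p^\theta(x) > 0$, the chain rule gives $\nabla_\theta \ln p^\theta(x) = \nabla_\theta p^\theta(x) / p^\theta(x)$, hence $\nabla_\theta p^\theta(x) = p^\theta(x)\, \nabla_\theta \ln p^\theta(x)$. On the set $\{p^\theta(x) = 0\}$ this product identity still holds in the sense that both sides integrate to the same thing against $f$: the set $\{p^\theta = 0\}$ is $\P^\theta$-null, so it does not contribute to $\int f \nabla_\theta \ln(p^\theta)\, d\P^\theta$, and one checks that it also may be discarded from $\int f \nabla_\theta p^\theta\, d\mu$ under the regularity assumption (at a point where $p^\theta(\cdot) = 0$ and $p^\theta$ is differentiable in $\theta$ with the density staying nonnegative, $p^\theta$ attains a minimum there, forcing $\nabla_\theta p^\theta = 0$). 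Substituting, $\int_\Omega f(x)\, \nabla_\theta p^\theta(x)\, d\mu(x) = \int_\Omega f(x)\, \nabla_\theta \ln p^\theta(x)\, p^\theta(x)\, d\mu(x) = \E^{\P^\theta}\big(\nabla_\theta \ln(p^\theta)\, f\big)$, which is the claim.

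The main obstacle is not any of the three computational steps — each is a one-line manipulation — but rather making precise the handling of the set where $p^\theta$ vanishes, since $\ln p^\theta$ is undefined there; the clean way around this is exactly the observation above that $\{p^\theta = 0\}$ is $\P^\theta$-null and that differentiability plus nonnegativity of the density forces the gradient of $p^\theta$ to vanish on that set, so both integrands may harmlessly be restricted to $\{p^\theta > 0\}$. Since the statement already hedges with the footnoted phrase ``sufficiently regular for the Leibniz rule to hold,'' I would fold this vanishing-set subtlety into that same regularity clause rather than belabor it, and present the proof as the three-line chain: definition of expectation, Leibniz rule, chain rule.
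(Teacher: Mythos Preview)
Your proposal is correct and follows essentially the same three-step chain as the paper's proof: write the expectation as an integral against $\mu$, apply the Leibniz rule to move $\nabla_\theta$ inside, and then use $\nabla_\theta p^\theta = p^\theta\,\nabla_\theta\ln p^\theta$ to rewrite the integrand as an expectation under $\P^\theta$. Your treatment of the zero set $\{p^\theta=0\}$ is actually more careful than the paper's, which simply divides and multiplies by $p^\theta(x)$ without comment.
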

\begin{proof}
    Using the Leibniz rule to exchange differentiation and integration, we get
    \begin{equation}\begin{split}
        \nabla_\theta\E^{\P^\theta}(f) &= \nabla_\theta\int_{\Omega} p^\theta(x) f(x)\,\mathrm d\mu(x) \\
        &= \int_{\Omega} \nabla_\theta p^\theta(x) f(x)\,\mathrm d\mu(x) \\
        &= \int_{\Omega} \frac{\nabla_\theta p^\theta(x)}{p^\theta(x)} f(x) p^\theta(x)\,\mathrm d\mu(x) \\
        &= \int_{\Omega}\nabla_\theta\ln(p^\theta(x)) \,\mathrm d\P^\theta(x) \\
        &=\E^{\P^\theta} (\nabla_\theta \ln(p^\theta) f). \qedhere
    \end{split}\end{equation}
\end{proof}
To derive the REINFORCE algorithm, we use Lemma \ref{lem:log-derivative trick} applied to the family $\P^\theta$ of joint distributions over the $R_t, S_t, A_t$ (this distribution depends on $\theta$ through the choice of policies $\pi_t^\theta$) in order to obtain, under the assumption that $\P^\theta$ have densities $p^\theta$ satisfying the conditions of Lemma \ref{lem:log-derivative trick},  that 
\begin{equation}
    \nabla_\theta\E\left(R_t\right) = \E\left(\nabla_\theta \ln \pi^\theta_t(A_t \mid S_t) R_t\right).
\end{equation}

\subsubsection{Proximal Policy Optimization}\label{sect:PPO-final}
Having introduced basic PGMs in Section \ref{sect:PGM}, we now describe the approach used in our article, Proximal Policy Optimization (PPO). A big problem with PGMs such as REINFORCE is that the updates introduced through gradient ascent may be numerically instable. PPO, introduced in \cite{schulman2017proximal}, aims to solve this problem by recasting the optimization problem.

For $t\in\Z_{\ge 0}$, we define the \emph{advantage function at time $t$} as the function $\mathfrak A_t$ mapping each $s\in\kS$ and $a\in\kA$ to
\begin{equation}\label{eq:advantage-function}
    \mathfrak A_t(s,a)\define \E(R_t\mid S_t = s \land A_t = a) - \E(R_t\mid S_t=s).
\end{equation}
The advantage function gives for a state $s$ and action $a$ the difference in expected reward if we perform action $a$ instead of performing a random action drawn from the policy $\pi_t(\cdot\mid S_t)$. Since the advantage function depends on the policies, we will write $\mathfrak A_t^{\pi}$ for the advantage policy associated of policy $\pi$ where the choice of the policy is not evident from the context. We note that, for ease of exposition, we only give the Definition for a single time step. In a full implementation, one may sum up to a finite horizon or approximate the full discounted future reward, see Remark \ref{rem:summing}.

Proximal policy optimization, just like REINFORCE, iteratively updates a set of parameters $\theta\in\Theta$ which parameterize a family of policies $\pi_t^\theta$. We start with an initial set of parameters $\theta_0$ and then update through the following step.
\begin{equation}
    \theta_{k+1} \in\operatorname{arg max}_{\theta\in\Theta} \E(L^{\text{clip}}(S_t, A_t, \theta_k,\theta)),
\end{equation}
where
\begin{equation}
    L^{\text{clip}}(s, a, \theta_k,\theta)\define \min\left(r(\theta,\theta_k)\mathfrak A^{\pi_t^{\theta_k}}(s, a), \min\left(1+\varepsilon, \max\left(1-\varepsilon, r(\theta,\theta_k, a, s)\right)\right) \mathfrak A^{\pi_t^{\theta_k}}(s,a)\right)
\end{equation}

with
\begin{equation}
    r(\theta,\theta_k) \define \frac{\mathrm d\pi_t^{\theta}(\cdot\mid s)}{\mathrm d\pi_t^{\theta_k}(\cdot\mid s)}(a).
\end{equation}
We assume here implicitly that $\pi_t^{\theta}(\cdot\mid s)$ is absolutely continuous with respect to $\pi_t^{\theta_k}(\cdot\mid s)$. 

A main question for proximal policy optimization is how to access the advantage function. This is because the expectations in \ref{eq:advantage-function} are not directly accessible. Instead, we estimate it using a \emph{value network}, whose task it is to approximate the state value function. The policy updates then use values provided by the value network. In algorithm \ref{alg:PPO-pseudo-code}, we denote the value network with parameters $\phi$ by $V^\phi_t$.

Therefore, PPO is an \emph{actor-critic} approach. In the actor-critic approach, two functions get trained concurrently. The first function, the so-called \emph{critic}, approximates the state value function. The second function, the so-called \emph{actor}, is the policy of the agent aiming to maximize the cumulative discounted expected reward. The actor uses values generated by the critic in order to update its parameters.

\begin{algorithm}
\caption{PPO-Clip natural language pseudo-code, minimally adapted from \cite{PPO-OpenAI-SpinningUp}}\label{alg:PPO-pseudo-code}
\begin{algorithmic}
\State Input: Initial policy parameters $\theta_0$, initial value function parameters $\phi_0$.
\For{$k=0,1,2,\dots$}
\State Generate a sequence of realizations of states, actions and rewards according to the policies $\pi_0^{\theta_0},\pi_1^{\theta_0},\dots$.
\State Compute estimates of the advantage function based on the current value network $V^{\phi_k}_t$.
\State Update the policy by maximizing the PPO-Clip objective $L^{\text{Clip}}$, typically with stochastic gradient ascent.
\State Update value network parameters $\phi_k$ to those $\phi_{k+1}$ which minimize the empirically observed expectation of $(V_t^\phi(S_t)-R_t)^2$, typically via gradient descent.
\EndFor
\end{algorithmic}
\end{algorithm}

\clearpage

\end{document}